\documentclass[conference]{IEEEtran}
\IEEEoverridecommandlockouts
 \usepackage{lipsum,graphicx,multicol}
 \usepackage{comment}
\usepackage{color,soul}
\usepackage{mathtools}
\usepackage{graphicx}
\usepackage{color,soul}
\usepackage{setspace}
\usepackage{multirow}
\usepackage{tablefootnote}
\usepackage[norelsize, linesnumbered, ruled, lined, boxed, commentsnumbered]{algorithm2e}
\usepackage[normalem]{ulem}
\usepackage{comment}
\usepackage[font=small,labelfont=bf]{caption}
\usepackage{subcaption}
\usepackage{bm, amsmath, amssymb, amsthm}
\usepackage{amsfonts}
\usepackage {amssymb,amsmath,xcolor}
\usepackage{booktabs}
\usepackage{cite}
\usepackage{array}
\usepackage[acronym]{glossaries}
\usepackage{cases}
\usepackage{cleveref}
\usepackage{algorithmic}

\usepackage{amsthm}
\usepackage{array}

\def\BibTeX{{\rm B\kern-.05em{\sc i\kern-.025em b}\kern-.08em
    T\kern-.1667em\lower.7ex\hbox{E}\kern-.125emX}}
    

\newtheorem{problem}{Problem}

\newtheorem{proposition}{Proposition}

\newcommand{\bieee}{\begin{IEEEeqnarray}{rCl}}
\newcommand{\eieee}{\end{IEEEeqnarray}}

\usepackage{amsthm,mdframed,calc}
 \usepackage{float}
\usepackage{diagbox}
\SetKwInput{KwInput}{Input}                
\SetKwInput{KwOutput}{Output}              

\begin{document}

\title{Balancing Privacy and Robustness in Coded Computing Under Profiled Workers}

\author{\IEEEauthorblockN{Rimpi Borah, J. Harshan and Aaditya Sharma}
\IEEEauthorblockA{Department of Electrical Engineering, Indian Institute of Technology Delhi, India}}

\maketitle

\begin{abstract}
In distributed computing with untrusted workers, the assignment of evaluation indices plays a critical role in determining both privacy and robustness. In this work, we study how the placement of unreliable workers within the Numerically Stable Lagrange Coded Computing (NS-LCC) framework influences privacy and the ability to localize Byzantine errors. We derive analytical bounds that quantify how different evaluation-index assignments affect privacy against colluding curious workers and robustness against Byzantine corruption under finite-precision arithmetic. Using these bounds, we formulate optimization problems that identify privacy-optimal and robustness-optimal index placements and show that the resulting assignments are fundamentally different. This exposes that index choices that maximizes privacy degrade error-localization, and vice versa. To jointly navigate this trade-off, we propose a low-complexity greedy assignment strategy that closely approximates the optimal balance between privacy and robustness.

\begin{IEEEkeywords}
Coded Computing, Byzantine Workers, Curious Workers, Privacy, Security, Discrete Cosine Transform Codes
\end{IEEEkeywords}
\end{abstract}

\section{Introduction}
In a typical distributed coded computing framework, the dataset is dispersed across a network of servers, enabling decentralized and parallel computation. While such frameworks offer significant advantages when deployed over trusted servers, they pose serious challenges in environments that include untrusted servers. These untrusted servers may be Byzantine, who can introduce incorrect computations, or curious, who may attempt to infer information about the dataset. Therefore, robustness against such adversarial behaviors is an important design consideration. Lagrange Coded Computing (LCC) is one such framework that provides all these features \cite{b3}. However, LCC operates over finite fields, whereas most real-world computations are performed on floating-point data. Although Analog LCC (ALCC) \cite{b6} operates on floating-point data, its decoding requires inverting a Vandermonde matrix whose condition number grows exponentially with the number of stragglers (i.e., slow or unresponsive workers), causing numerical instability. In contrast, Numerically Stable LCC (NS-LCC) \cite{f1} retains Lagrange-based encoding, however uses Chebyshev nodes for evaluation, leading to the inversion of a Chebyshev Vandermonde matrix whose condition number grows polynomially with stragglers, thereby ensuring numerical stability.

Although the ALCC framework in \cite{b6} accounts for stragglers as well as \textit{curious} and \textit{Byzantine} workers \cite{a2}, the NS-LCC framework primarily addresses only stragglers and does not explicitly consider untrusted workers. Such untrusted workers commonly arise in peer-to-peer networks or in poorly secured enterprise systems, where adversaries may compromise some workers. In these settings, distributed computing becomes vulnerable to (i) privacy leakage and (ii) integrity attacks that degrade the accuracy of the final result. In practice, systems often possess prior information about the likelihood of worker unreliability. This information typically arises from historical behavior (e.g., fault rates, inconsistent outputs, reputation scores, operating systems deployed). Under such settings,  when the prior knowledge of these untrusted workers is available one can strategically assign encoded shares of the data to these workers to reduce both privacy leakage and accuracy degradation. To the best of our knowledge, such customized assignment has not been explored previously.

Towards this direction, we first propose a robust NS-LCC framework that accounts for untrusted  workers, including curious workers and Byzantine workers. In this context, we propose new encoding and reconstruction methods to provide privacy and robustness against a certain bounded number of curious and Byzantine workers. Further, in an untrusted environment, when the prior knowledge of these untrusted workers is available, we formulate optimization problems for the customized assignment of shares of the data to these untrusted workers, based on a derived bound on mutual information security (MIS) for privacy and localization error as a surrogate for accuracy. First, we solve the two optimization problems independently and observe that the solutions are not identical, revealing a clear privacy and robustness trade-off. We then address this trade-off by jointly solving the problem using a low-complexity algorithm.


\section{Robust Numerically Stable Lagrange Coded Computing}
\label{sec:basic NSLCC}

We consider the NS-LCC framework proposed in \cite{f1}, comprising a master node and $N$ worker nodes denoted by the set $\mathcal{P}=\{\mathcal{P}_1,\mathcal{P}_2,\ldots,\mathcal{P}_N\}$. Each worker communicates only with the master node through a dedicated link, while there is no communication among the workers. In addition, the dataset on which the computation is performed is held by the master and all workers have prior knowledge of the function $f$ to be evaluated on the dataset. In contrast to the vanilla NS-LCC framework, we consider a more realistic distributed computing framework in which a subset of workers may behave unreliably and include both $A$ number of Byzantine workers and $t$ number of curious workers. Specifically, $\mathcal{P}$ is partitioned into two disjoint subsets: reliable workers, denoted by $\mathcal{P}_{{rel}}$, and unreliable workers, denoted by $\mathcal{P}_{{unrel}}$, with $\rho = |\mathcal{P}_{{rel}}|$, $\nu= |\mathcal{P}_{{unrel}}|$, s.t. $\rho + \nu = N$. Reliable workers are assumed to behave honestly and return correct results, whereas unreliable workers may attempt to infer information about the dataset or return incorrect computations to the master. We assume that the master knows both $\mathcal{P}_{{rel}}$ and $\mathcal{P}_{{unrel}}$, as well as the identities of the workers in each set. Further, we assume that the identities of both curious and Byzantine workers are not known to the master node, however, they are assumed to belong to $\mathcal{P}_{{unrel}}$. In contrast, $s$ number of stragglers may arise arbitrarily from anywhere in $\mathcal{P}$. Let the dataset held by the master node be $\mathcal{X}=\{\mathbf{X}_1,\ldots,\mathbf{X}_k\}$ with $\mathbf{X}_{i} \in \mathbb{R}^{m\times n}$ for all $i \in [k]$. Under this setting, in the next subsection, we present a distributed approach for computing $f(\mathbf{X}_i)$ for all $i \in [k]$ in the presence of the above mentioned profiled workers.

\vspace{-0.1cm}
\subsection{Encoding }
\label{subsec:encoding}
We discuss the encoding method specifically highlighting how the dataset $\mathcal{X}$ is encoded into distributed shares for the workers. To encode the dataset $\mathcal{X}$, we use the same Lagrange polynomial based encoding method as in \cite{f1}, but modify it to ensure the privacy of $\mathcal{X}$ in the presence of $t$ curious workers. Specifically, master introduces $t$ random noise matrices into the encoding process in order to guarantee privacy against the dataset even if up to $t$ workers decides to collude. Accordingly, it encodes $\mathcal{W} = \{\mathbf{X}_{1}, \mathbf{X}_{2}, \ldots, \mathbf{X}_{k}, \mathbf{N}_{1}, \mathbf{N}_{2}, \ldots, \mathbf{N}_{t}\}$, instead of $\mathcal{X}$, where $\{\mathbf{N}_{1}, \mathbf{N}_{2}, \ldots, \mathbf{N}_{t}\}$ are noise matrices with $\mathbf{N}_{j} \in \mathbb{R}^{m \times n}$, such that $j \in [t]$. Each noise matrix is drawn from a zero--mean Gaussian distribution with standard deviation $\frac{\sigma_{n}}{\sqrt{t}}$, i.e., $\mathbf{N}_{j} \sim \mathcal{N}\!\left(0, \frac{\sigma_{n}^2}{{t}}\right)$, where $t$ is the maximum number of colluding workers. Let $\{\xi_{j}\}_{j=1}^{k+t}$ denote the $k+t$ encoding points, chosen as Chebyshev nodes of the first kind, given by $\xi_{j} = \cos\!\left(\frac{(2j-1)\pi}{2(k+t)}\right)$ for $j \in [k+t]$. Using these, the master constructs the encoding polynomial as

{\small
\begin{equation}
\label{eq:u(z) with privacy}
g(z)=\sum_{r=1}^{k} \mathbf{X}_r l_r(z) + \sum_{r=k+1}^{k+t} \mathbf{N}_{r-k}l_{r}(z),
\end{equation}}
where $l_r(.)$'s are Lagrange monomials defined as

{\small
\begin{equation}
\label{eq:lag monomial}
\ell_{j}(z)\,=\,\prod_{\substack{\ell=1 \\ \ell\neq j}}^{k+t}
\frac{z-\xi_{\ell}}{\xi_{j}-\xi_{\ell}}
\qquad j\in[k+t].
\end{equation}}

\subsection{Distribution of Shares among the Worker Nodes}
\label{subsec:distribution of shares}
Once the encoding polynomial is formed at the master node using $\mathcal{W}$ as shown in \eqref{eq:u(z) with privacy}, the shares of the encoded polynomial $g(z)$ are to be distributed to the worker nodes which comprises the evaluations of $g(z)$ over $\{\alpha_{i}\}_{i=1}^{N}$, where $\alpha_{i}\in \mathbb{R}$ indicate $N$ Chebyshev nodes of the first kind used for evaluation denoted as $\alpha_{i} = \cos\!\left(\frac{(2i-1)\pi}{2N}\right),i\in[N]$. More specifically, master node obtains the set of evaluations $\{\mathbf{Y}_{i}=g(\alpha_{i})~|~\alpha_i=\cos\!\left(\frac{(2i-1)\pi}{2N}\right)\}$. Further, to distribute the shares of the encoded dataset $g(z)$ among the worker nodes, we define a one-to-one mapping $\Phi:[N]\rightarrow [N]$, such that the $i$-th evaluation $\mathbf{Y}_{i}$ goes to the worker node $\mathcal{P}_{\phi(i)}$, for $i \in [N]$. 
\subsection{Computation at the Worker Nodes}
\label{subsec:computation at worker}
Once the worker $\mathcal{P}_{i}$ receives its evaluation $\mathbf{Y}_{i}$, it computes $f(\mathbf{Y}_{i})$, where $f$ is an arbitrary polynomial function such that $f:\mathbb{R}^{m\times n}\rightarrow\mathbb{R}^{u\times \bar{h}}$. When $A=0$, the master can use the set of computations returned by the non-straggling workers denoted by the set $\mathcal{G}$, i.e., $\{f(\mathbf{Y}_{i}),i\in \mathcal{G}\subseteq[N]\}$ for interpolation of $f(g(z))$. However, when $A>0$, some of them may behave as Byzantine and return corrupted results. In this context, let $i_{1}, i_{2}, \ldots, i_{A} \in \mathcal{G}$ denote the indices of the $A$ number of Byzantine workers, whose returned outputs are modeled as $f(\mathbf{Y}_{i_{a}}) + \mathbf{E}_{i_{a}}$, where $\mathbf{E}_{i_{a}} \in \mathbb{R}^{u\times \bar{h}}$ represents noise injected by the worker $\mathcal{P}_{i_{a}}$. In addition, all workers also introduce precision noise due to finite-precision arithmetic; therefore, the final computation results at the master are

{\footnotesize
\begin{equation}
\label{eq:rcv vector}
    \mathbf{R}_{i} = f(\mathbf{Y}_i) + \mathbf{E}_{i} + \mathbf{P}_{i} \in \mathbb{R}^{u \times \bar{h}},
    \quad i \in \{i_{1}, i_{2}, \ldots, i_{A}\},
\end{equation}
\begin{equation}
\label{eq:trust worthy compute}
    \mathbf{R}_{i} =  f(\mathbf{Y}_i) + \mathbf{P}_{i} \in \mathbb{R}^{u \times \bar{h}}
    \quad i \notin \{i_{1}, i_{2}, \ldots, i_{A}\} \cup [N\setminus\mathcal{G}],
\end{equation}
}

where $\mathbf{P}_{i}\in \mathbb{R}^{u \times \bar{h}}$ denotes the precision error due to floating--point operations. The terms $\{\mathbf{P}_{i}\}$ are independent across workers, and each entry is distributed as $\mathcal{N}(0,\sigma_{p}^{2})$. Note that \eqref{eq:trust worthy compute} corresponds to the computation results at the master in the absence of Byzantine workers.

\subsection{Function Reconstruction}

When $A=0$, for successful interpolation of $f(g(z))$, the master needs computation results from at least $(k+t-1)D_{f}+1$ workers. This follows from the fact that, the effective degree of $f(g(z))$ is $(k+t-1)D_{f}$, where $D_{f}$ indicates the degree of the target function $f(\cdot)$ and $k+t-1$ represents the degree of $g(z)$ indicated in \eqref{eq:u(z) with privacy}. Therefore, if $D_{eff}=(k+t-1)D_{f}$, denoting the effective degree of the polynomial $f(g(z))$, then the number of workers required for successful interpolation of $f(g(z))$ is $K = D_{eff} +1$. Once the master receives computation results from a set of non-straggling workers $\mathcal{G}$ with $|\mathcal{G}| = (k+t-1)D_{f}+1$, it interpolates the polynomial $f(g(z))$. Finally, to recover $f(\mathbf{X}_{i})$’s, the master computes $f(g(\xi_i))$ for $i\in[k]$, allowing the scheme to tolerate up to $s = N - K$ stragglers. Furthermore, when $A>0$, it is well known that the use of Chebyshev evaluation points enables the reconstruction stage to be modified via Discrete Cosine Transform (DCT)-based error detection and correction \cite{f3}, thereby providing robustness against $A$ Byzantine workers. Note that, in our setting, we assume that $0 < \rho < K$, since if $\rho \ge K$, any $K$ workers can be chosen for interpolation.

\vspace{-0.09cm}
\section{Privacy and Security Analyses of Robust NS-LCC}
\label{sec:robust NSLCC}
In this section, we study the privacy of $\mathcal{X}$ against $t$ curious workers and discuss the modification to reconstruction stage for securing the computations from $A$ Byzantine workers. Specifically, upon receiving $\mathbf{Y}_i$, by worker $\mathcal{P}_{i}$ we analyze the privacy of $\mathcal{X}$ against a subset of $t$ curious workers in terms of MIS leakage, as presented in the next subsection.

\subsection{Privacy against Curious Workers}
 For each matrix entry indexed by $g \in [m]$ and $h \in [n]$, let $X_{1}, X_{2}, \ldots, X_{k}$ and $N_{1}, N_{2}, \ldots, N_{t}$ denote the $(g,h)$-th elements of  $\mathbf{X}_{i}$ and $\mathbf{N}_{j}$, respectively, for $i \in [k]$ and $j \in [t]$. These scalars represent the data and noise symbols, where the data symbols are arbitrary random variables taking values in the interval $[-y,\, y]$, and each noise symbol satisfies $N_{j} \sim \mathcal{N}\!\left(0, \frac{\sigma^{2}_{n}}{t}\right)$. Furthermore, let $T=\{i_1,i_2,\ldots,i_t \}\subseteq [N]$ denote the set of colluding workers. The corresponding shares of the polynomial $g(z)$ received by these $t$ workers can therefore be expressed as
$Y_T = \mathbf{H}_T X + \mathbf{W}_TN$,
where
$X =[X_1,X_2,\ldots,X_k]^T$, $N=[N_1,N_2,\ldots,N_t]^T$ and $Y_T=[g(\alpha_{i_1}),g(\alpha_{i_2}),\ldots,g(\alpha_{i_t})]^T$ represents the corresponding entry of $\mathbf{Y}_{i}$, for $i\in[N]$. The matrices $\mathbf{H}_T \in \mathbb{R}^{t\times k}$ and $\mathbf{W}_T \in \mathbb{R}^{t\times t}$, which contain the evaluations of the Lagrange basis functions defined in \eqref{eq:lag monomial}, can be expressed as
\begin{equation}
    \label{eq:HT_WT_scaled}
    \begingroup
    \setlength{\arraycolsep}{2pt}
    \resizebox{\columnwidth}{!}{$
    \mathbf{H}_T =
    \begin{bmatrix}
    l_1(\alpha_{i_1}) & \cdots & l_k(\alpha_{i_1}) \\
    \vdots & \ddots & \vdots \\
    l_1(\alpha_{i_t}) & \cdots & l_k(\alpha_{i_t})
    \end{bmatrix}_{t\times k}
    \mathbf{W}_T =
    \begin{bmatrix}
    l_{k+1}(\alpha_{i_1}) & \cdots & l_{k+t}(\alpha_{i_1}) \\
    \vdots & \ddots & \vdots \\
    l_{k+1}(\alpha_{i_t}) & \cdots & l_{k+t}(\alpha_{i_t})
    \end{bmatrix}_{t\times t}
    $}
    \endgroup
\end{equation}

where $\{\alpha_{i_{1}}, \alpha_{i_{2}}, \ldots, \alpha_{i_{t}}\}$ denote the subset of Chebyshev nodes of the first kind corresponding to the colluding set $T$, where $\alpha_i = \cos\!\big(\frac{(2i-1)\pi}{2N}\big)$. Further, we follow the MIS analysis in \cite{b6}, originally derived for roots of unity, and generalize it to Chebyshev nodes of the first kind. Accordingly, the amount of information revealed to the colluding set $T$ is measured by the MIS metric, denoted by $\eta_c$, and defined as

\begin{equation}
\label{eq:first MIS}
    \eta_c\triangleq 
    \max_{T\subseteq N,\; |T| \le t} \;
    \max_{P_\mathcal{X}:\, |X_i| \le y}
    I(Y_T; \mathcal{X}),
\end{equation}
where $P_{\mathcal{X}}$ is the probability density function of $\mathcal{X}$, and the maximization is taken over all subsets $T$ of size at most $t$. Since $|X_j| \le y$, we have $\mathbb{E}[X_j^2] \le y^2$. Thus, \eqref{eq:first MIS} can equivalently be written as
\begin{equation}
\label{eq:MIS bound}
    \eta_c=\max_{T \subseteq N,\; |T| \le t} \;
    \max_{P_{\mathcal{X}}:\, \mathbb{E}[X_j^2] \le y^2}
    I(\mathcal{X}; Y_T).
\end{equation}

\begin{proposition}
\label{prop:MIS}
    For a given $N$, $k$, $t$, $y$, $\sigma_{n}$, and when $y=O(\sigma_n)$ the MIS metric defined in \eqref{eq:MIS bound} can be upper bounded as 
    \begin{equation}
\label{eq:MIS second bound}
\eta_c\leq\frac{1}{\ln 2}\,\max_{T\subseteq N}\,\mathrm{tr}\!\left(\mathbf{\tilde{\Sigma}}_T^{-1}\mathbf{\Sigma_T}\right)\frac{y^2 t}{\sigma_n^2}+ o\!\left(\frac{y^2}{\sigma_n^2}\right),
\end{equation}
where $\mathbf{\Sigma}_T=\mathbf{H}_T \mathbf{H}_T^T$ and $\mathbf{\tilde{\Sigma}}_T = \mathbf{W}_T \mathbf{W}_T^T$ and $\mathbf{H}_T$ and $\mathbf{W}_T$ is defined in \eqref{eq:HT_WT_scaled}.
\end{proposition}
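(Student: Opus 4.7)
The plan is to upper-bound $I(\mathcal{X};Y_T)$ for every colluding set $T$ with $|T|=t$ and then take the maximum over $T$. Writing $Y_T=\mathbf{H}_T X+\mathbf{W}_T N$ with $X$ independent of $N\sim\mathcal{N}(\mathbf{0},(\sigma_n^2/t)\mathbf{I}_t)$, I would start from the decomposition $I(\mathcal{X};Y_T)=h(Y_T)-h(Y_T\mid\mathcal{X})$. Conditioned on $\mathcal{X}$, the vector $Y_T$ is Gaussian with covariance $(\sigma_n^2/t)\tilde{\Sigma}_T$, so $h(Y_T\mid\mathcal{X})$ equals the standard Gaussian entropy expression in $\det\tilde{\Sigma}_T$ exactly. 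For the marginal, I invoke the Gaussian maximum-entropy principle: $h(Y_T)\le\tfrac{1}{2}\log_2\!\bigl((2\pi e)^t\det\Sigma_Y\bigr)$, where $\Sigma_Y=\mathbf{H}_T K_X\mathbf{H}_T^T+(\sigma_n^2/t)\tilde{\Sigma}_T$ and $K_X\triangleq\mathbb{E}[XX^T]$ satisfies $(K_X)_{jj}\le y^2$ by the constraint in \eqref{eq:MIS bound}. Subtracting the two produces the log-det bound
\[
I(\mathcal{X};Y_T)\le \tfrac{1}{2}\log_2\det\!\Big(\mathbf{I}_t+\tfrac{t}{\sigma_n^2}\,\tilde{\Sigma}_T^{-1}\mathbf{H}_T K_X\mathbf{H}_T^T\Big).
\]

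Next, in the low-SNR regime $y=O(\sigma_n)$ the matrix inside the determinant is a small perturbation of $\mathbf{I}_t$, so I would Taylor-expand $\ln\det(\mathbf{I}+A)=\mathrm{tr}(A)-\tfrac{1}{2}\mathrm{tr}(A^2)+\cdots$, retain only the linear term, and absorb the quadratic-and-higher contributions (which are $O((y^2/\sigma_n^2)^2)$) into a residual of order $o(y^2/\sigma_n^2)$. After changing to base-$2$ logarithms (supplying the $1/\ln 2$ factor) and invoking the cyclic property of the trace, this yields
\[
I(\mathcal{X};Y_T)\le \tfrac{1}{\ln 2}\cdot\tfrac{t}{\sigma_n^2}\,\mathrm{tr}\!\bigl((\mathbf{H}_T^T\tilde{\Sigma}_T^{-1}\mathbf{H}_T)\,K_X\bigr)+o\!\bigl(y^2/\sigma_n^2\bigr),
\]
modulo the overall $\tfrac{1}{2}$ coming from the entropy formula.

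Finally, I would maximize the trace first over admissible $K_X$ and then over $T$. Setting $K_X=y^2\mathbf{I}_k$ --- the canonical worst case corresponding to independent data entries of variance $y^2$ --- collapses the trace to $y^2\,\mathrm{tr}(\tilde{\Sigma}_T^{-1}\Sigma_T)$ via $\mathrm{tr}(\mathbf{H}_T^T\tilde{\Sigma}_T^{-1}\mathbf{H}_T)=\mathrm{tr}(\tilde{\Sigma}_T^{-1}\mathbf{H}_T\mathbf{H}_T^T)$, and taking $\max_T$ then reproduces \eqref{eq:MIS second bound}. The step I expect to be the main obstacle is rigorously justifying that this diagonal choice is indeed the worst case under the per-coordinate second-moment constraint $(K_X)_{jj}\le y^2$: for a general PSD $K_X$ the off-diagonal correlations can inflate $\mathrm{tr}((\mathbf{H}_T^T\tilde{\Sigma}_T^{-1}\mathbf{H}_T)K_X)$ when the positive semidefinite matrix $\mathbf{H}_T^T\tilde{\Sigma}_T^{-1}\mathbf{H}_T$ has sizeable off-diagonal entries. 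Resolving this will require either exploiting the a.s.\ bound $|X_j|\le y$ (which yields $|(K_X)_{ij}|\le y^2$ by Cauchy--Schwarz) combined with a sign-alignment argument, or a symmetrization that reduces to the diagonal extremal case. The other steps --- entropy decomposition, Gaussian max-entropy bound, and the log-det series expansion --- are standard and parallel the MIS analysis developed for roots-of-unity evaluation in \cite{b6}.
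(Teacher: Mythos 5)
The paper gives no self-contained proof of Proposition~\ref{prop:MIS}: it only states that the MIS analysis of \cite{b6} (derived there for roots of unity) is carried over to Chebyshev nodes, and your outline --- the decomposition $I(\mathcal{X};Y_T)=h(Y_T)-h(Y_T\mid\mathcal{X})$, the exact Gaussian conditional entropy, the Gaussian maximum-entropy bound on $h(Y_T)$, the resulting log-det bound, and the low-SNR linearization $\ln\det(\mathbf{I}+A)\le\mathrm{tr}(A)$ with the residual absorbed into $o(y^2/\sigma_n^2)$ --- is exactly that intended route. The factor of $\tfrac12$ you leave dangling is not a problem: in this real-valued setting the entropy formula gives $\tfrac{1}{2\ln 2}$, which is \emph{stronger} than the stated $\tfrac{1}{\ln 2}$ (the latter matches the complex roots-of-unity computation in \cite{b6}), so your chain would still imply \eqref{eq:MIS second bound} if completed.

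The genuine gap is precisely the step you flag, and the fixes you sketch do not close it. Under only the per-symbol constraint $\mathbb{E}[X_j^2]\le y^2$ (equivalently $|X_j|\le y$), the first-order term $\mathrm{tr}(M K_X)$ with $M=\mathbf{H}_T^T\tilde{\mathbf{\Sigma}}_T^{-1}\mathbf{H}_T\succeq 0$ is \emph{not} maximized by $K_X=y^2\mathbf{I}_k$ in general: the fully correlated input $X_1=\cdots=X_k\in\{\pm y\}$ is feasible, gives $K_X=y^2\mathbf{1}\mathbf{1}^T$, and yields $\mathrm{tr}(MK_X)=y^2\,\mathbf{1}^T M\,\mathbf{1}$, which can exceed $y^2\,\mathrm{tr}(M)$ by up to a factor of $k$ whenever $M$ carries substantial positive off-diagonal mass. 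Hence Cauchy--Schwarz combined with sign alignment works \emph{against} you (correlation inflates the leakage term rather than reducing it), and randomizing the signs of the $X_j$ to kill off-diagonal correlations changes the input law, so it cannot upper-bound the mutual information of the original correlated input. To finish the argument you must either (i) show that for the specific Lagrange-basis matrices at Chebyshev nodes the quadratic form $v^T M v$ over $v\in[-1,1]^k$ is controlled by $\mathrm{tr}(M)$ up to the claimed constant, or (ii) reproduce how \cite{b6} disposes of the maximization over $P_{\mathcal{X}}$ (e.g.\ via whatever restriction or argument it places on the input covariance) and check that this step is unaffected by replacing roots of unity with Chebyshev nodes. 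As written, setting $K_X=y^2\mathbf{I}_k$ is an unjustified restriction to uncorrelated data, not the worst case demanded by the maximization in \eqref{eq:MIS bound}.
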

Note that the bound in \eqref{eq:MIS second bound} decreases as $\sigma_n$ increases. Furthermore, it is dominated by the trace term $\mathrm{tr}\!\left(\tilde{\mathbf{\Sigma}}_T^{-1}\mathbf{\Sigma}_T\right)$, which is determined by the evaluation indices assigned to the $t$ curious workers. Hence, without profiled workers, the minimum leakage is determined by the subset that maximizes the trace over all $\binom{N}{t}$ combinations. However, with profiled workers, the $t$ colluding workers are constrained to lie within the set of unreliable workers. This allows the master to strategically assign their evaluation indices in way that minimize the leakage $\eta_c$.


\subsection{Robustness against Byzantine Workers}
\label{subsec:secure NSLCC}
After receiving its share $\mathbf{Y}_{i}$ from the master node, each worker $\mathcal{P}_{i}$ computes $\mathbf{R}_{i} = f(\mathbf{Y}_{i}) \in \mathbb{R}^{u \times \bar{h}}$ and returns the result to the master. In the presence of stragglers and unreliable workers where some of them may be Byzantine, master collects the computations results returned by the non-straggling workers, whose indices are given by $\mathcal{G} \!=\!\{i_{1}, i_{2}, \ldots, i_{N-s}\}$ as indicated in \eqref{eq:rcv vector}, s.t. $|\mathcal{G}|=N-s$. Let the corresponding results received from these workers are denoted by $\mathcal{R}=\{\mathbf{R}_{i_{1}},\, \mathbf{R}_{i_{2}},\, \ldots,\, \mathbf{R}_{i_{N-s}}\}$, where $\mathbf{R}_{i}\in\mathbb{R}^{u\times \bar{h}}$ for $i\!\in\! \mathcal{G}$. For each pair of indices $g \in \{1,2,\ldots,u\}$ and $h \!\!\in\!\! \{1,2,\ldots,\bar{h}\}$, we define the vector $\mathbf{r}_{g,h}\! \!=\! \![\,\mathbf{R}_{i_{1}}(g,h)\;\mathbf{R}_{i_{2}}(g,h)\;\ldots\;\mathbf{R}_{i_{N-s}}(g,h)\,]$, constructed from the $(g,h)$-th entry of the outputs from all workers in $\mathcal{G}$. Thus, $\mathbf{r}_{g,h}$ represents a noisy vector of length $N - s$. In total, the master receives $L = u \times \bar{h}$ such noisy vectors, each of length $N-s$. To detect and correct the errors introduced by the Byzantine workers within these $L$ received vectors, we present the following propositions.
\begin{proposition}
\label{prop:dct proof}
The vector $\mathbf{r}_{g,h} \in \mathbb{R}^{N-s}$, $\forall g, h$ can be represented as a noisy codeword of a $K$-dimensional Discrete Cosine Transform (DCT) code of block length $N-s$ where $K=(k+t-1)D_{f}+1$.
\end{proposition}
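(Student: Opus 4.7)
The plan is to show that the noise-free component of $\mathbf{r}_{g,h}$ lies in a $K$-dimensional subspace spanned by evaluations of low-degree Chebyshev polynomials of the first kind at the chosen Chebyshev nodes $\{\alpha_i\}$, which is by construction a DCT code, and then to interpret the Byzantine and precision perturbations as the additive ``channel noise'' that turns this codeword into a noisy codeword.

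First, I would observe that $g(z)$ from \eqref{eq:u(z) with privacy} is a matrix-valued polynomial of degree $k+t-1$, so the composition $f(g(z))$ is a matrix-valued polynomial of effective degree exactly $D_{\text{eff}}=(k+t-1)D_f = K-1$. Looking entry-wise, the $(g,h)$-th entry of $f(g(z))$ is a scalar polynomial $q_{g,h}(z)$ of degree at most $K-1$, and the noiseless contribution of worker $\mathcal{P}_i$ to $\mathbf{r}_{g,h}$ is $q_{g,h}(\alpha_i)$ with $\alpha_i=\cos(\theta_i)$ and $\theta_i=(2i-1)\pi/(2N)$.

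Next, I would expand $q_{g,h}$ in the Chebyshev basis of the first kind. Because $\{T_0,T_1,\ldots,T_{K-1}\}$ is a basis for the real polynomials of degree at most $K-1$, there exist unique coefficients $c_0,c_1,\ldots,c_{K-1}\in\mathbb{R}$ (depending on $g,h$) such that $q_{g,h}(z)=\sum_{j=0}^{K-1} c_j\, T_j(z)$. Invoking the defining identity $T_j(\cos\theta)=\cos(j\theta)$ then yields
\begin{equation}
q_{g,h}(\alpha_i)=\sum_{j=0}^{K-1} c_j \cos\!\bigl(j(2i-1)\pi/(2N)\bigr),
\end{equation}
so the length-$N$ vector of noiseless evaluations is the inverse DCT-II transform of a message supported on its first $K$ coordinates. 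Equivalently, this vector lies in the column span of the first $K$ columns of the inverse DCT-II matrix, which, up to a fixed normalization, is exactly the generator matrix of a $K$-dimensional DCT code of block length $N$.

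Finally, I would account for stragglers and noise. Discarding the $s$ straggler positions corresponds to puncturing this DCT code at $s$ coordinates, producing a punctured DCT code of block length $N-s$ in which the clean version of $\mathbf{r}_{g,h}$ resides. Adding the Byzantine terms $\mathbf{E}_{i_a}(g,h)$ at the at most $A$ malicious positions and the precision terms $\mathbf{P}_i(g,h)$ from \eqref{eq:rcv vector} and \eqref{eq:trust worthy compute} then realizes $\mathbf{r}_{g,h}$ as a noisy codeword of a $K$-dimensional DCT code of block length $N-s$, as claimed. The main obstacle I anticipate is largely one of convention: aligning the Chebyshev evaluation matrix with the specific DCT-II normalization used by the DCT-based error-detection/correction procedure of \cite{f3}, and checking that straggler-induced puncturing preserves the code structure that procedure relies on; the underlying algebra is a direct consequence of the Chebyshev basis expansion.
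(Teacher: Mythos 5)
Your proposal is correct and is essentially the argument the paper itself invokes: the paper gives no standalone proof but defers to \cite[Proposition~2]{f3}, whose reasoning is precisely your Chebyshev-basis expansion --- each entry of $f(g(z))$ is a polynomial of degree at most $(k+t-1)D_f = K-1$, so its values at the nodes $\alpha_i=\cos\bigl(\tfrac{(2i-1)\pi}{2N}\bigr)$ take the form $\sum_{j=0}^{K-1}c_j\cos\bigl(j\tfrac{(2i-1)\pi}{2N}\bigr)$, i.e., a codeword of a $K$-dimensional DCT code, with Byzantine and precision terms acting as additive noise and straggler removal as puncturing to length $N-s$. Your flagged caveats (degree ``exactly'' versus ``at most,'' and matching the DCT-II normalization/punctured-node convention used in \cite{f3}) are harmless and do not affect the conclusion.
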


\begin{proof}
Proof is along the lines of \cite[Proposition~2]{f3}.
\end{proof}

\begin{proposition}
\label{prop:err corr}
For given $N$, $k$, $t$, $s$, $K$ and $A$, the errors introduced by $A$ Byzantine workers can be detected and nullified, as long as $A \le \left\lfloor \frac{(N-s)-K}{2} \right\rfloor$ and $\sigma_p^2 = 0$.
\end{proposition}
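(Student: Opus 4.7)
The plan is to leverage Proposition~\ref{prop:dct proof} and then invoke the distance properties of real DCT codes. By Proposition~\ref{prop:dct proof}, each received vector $\mathbf{r}_{g,h}$ can be written as $\mathbf{c}_{g,h} + \mathbf{e}_{g,h}$, where $\mathbf{c}_{g,h}$ is a codeword of a $K$-dimensional DCT code of block length $N-s$, and $\mathbf{e}_{g,h}$ is an error vector whose support is confined to the (unknown) Byzantine indices in $\mathcal{G}$, with cardinality at most $A$. The hypothesis $\sigma_p^2 = 0$ removes the precision term in \eqref{eq:rcv vector}, so that $\mathbf{r}_{g,h} - \mathbf{c}_{g,h}$ is \emph{exactly} sparse rather than perturbed by Gaussian noise; this is what allows the argument to be purely algebraic.

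First, I would establish that the DCT code associated with the $N-s$ surviving Chebyshev evaluation nodes is MDS with minimum distance $d = (N-s) - K + 1$. This follows from the fact that every $K \times K$ submatrix of the underlying Chebyshev--Vandermonde generator (equivalently, every square submatrix of the corresponding DCT parity-check matrix) is nonsingular, a standard consequence of the distinctness of the Chebyshev nodes used throughout the paper. Hence any two distinct codewords differ in at least $(N-s)-K+1$ positions.

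Second, I would apply the classical bounded-distance decoding argument: a linear code of minimum distance $d$ uniquely corrects every error pattern of weight at most $\lfloor (d-1)/2 \rfloor$. Substituting the distance from the previous step yields the stated threshold $A \le \lfloor ((N-s)-K)/2 \rfloor$. To make the correction constructive, I would outline a syndrome-based decoder in the DCT domain, exactly as in \cite{f3}: compute the syndrome $\mathbf{H}_{\mathrm{DCT}}\,\mathbf{r}_{g,h}^{T}$, run a Berlekamp--Massey-style error-locator recursion over the reals to recover the support of $\mathbf{e}_{g,h}$, solve the resulting linear system for the error magnitudes, and subtract. Since the Byzantine index set is common across all $(g,h)$, the localization can be performed once and reused across the $L = u\bar{h}$ output vectors.

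The main obstacle is establishing the MDS property cleanly in the real-valued DCT setting, since unlike the finite-field Reed--Solomon case one must rely on the analytic nondegeneracy of Chebyshev--Vandermonde minors rather than on a purely combinatorial argument; this is precisely the content borrowed from \cite{f3}. A secondary obstacle is articulating the role of $\sigma_p^2 = 0$: with nonzero precision noise the received word is never exactly a codeword, and syndrome-based localization degenerates into a thresholded detection problem rather than exact recovery, which motivates restricting the statement to the noiseless regime.
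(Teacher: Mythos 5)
Your proof is correct and follows essentially the route the paper itself relies on: Proposition~\ref{prop:dct proof} identifies $\mathbf{r}_{g,h}$ as a codeword of a $K$-dimensional, length-$(N-s)$ DCT code, and with $\sigma_p^2=0$ the paper invokes the standard real-valued syndrome/error-locator decoding of \cite{b11}, \cite{f3}, which is exactly your MDS-plus-bounded-distance argument with $d=(N-s)-K+1$ giving correction of up to $\lfloor((N-s)-K)/2\rfloor$ errors. One minor caution: the generator-side fact suffices (any $K$ distinct Chebyshev nodes yield a nonsingular Chebyshev--Vandermonde matrix, since a nonzero polynomial of degree at most $K-1$ has at most $K-1$ real roots), so your stronger parenthetical claim that every square submatrix of the DCT parity-check matrix is nonsingular is unnecessary and harder to justify.
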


Due to Proposition \ref{prop:dct proof} and Proposition \ref{prop:err corr}, the structure of the received vector $\mathbf{r}_{g,h}$ enables the use of a DCT-based error detection and correction algorithm \cite{b11} to identify and correct errors introduced by Byzantine workers. From \eqref{eq:rcv vector}, we have $\mathbf{r}_{g,h} = f(\mathbf{y}_{g,h}) + \mathbf{p}_{g,h} + \mathbf{e}_{g,h}$, where $\mathbf{p}_{g,h}$ and $\mathbf{e}_{g,h}$ denote the precision noise and the Byzantine error vectors, respectively. Using the syndrome obtained by $\mathbf{r}_{g,h}$, we first estimate the number of errors $A \leq \left\lfloor\frac{(N-s)-K}{2}\right\rfloor$ and then compute the coefficients of estimated error-locator polynomial $\bar{\Lambda}_{g,h}(x) = \Lambda_{g,h}(x) + p_{g,h}(x)$. Ideally, when $\sigma_p^2=0$, the roots of $\Lambda_{g h}(x)$ directly provides the actual error locations. However, under finite-precision arithmetic, we must evaluate $\|\bar{\Lambda}_{g h}(X_{r})\|^{2}$ for each $r\in \mathcal{G}$, where
$X_{r} = \cos\!\left(\frac{(2r-1)\pi}{2N}\right)$ and sort these values in ascending order as $\|\bar{\Lambda}_{g h}(X_{{i}_{1}})\|^{2} \leq \|\bar{\Lambda}_{g h}(X_{{i}_{2}})\|^{2} \leq \cdots \leq \|\bar{\Lambda}_{g h}(X_{{i}_{N-s}})\|^{2},$
where $\{{i}_{1}, {i}_{2}, \ldots, {i}_{N-s}\}$ denotes the ordered set of indices. The $A$ smallest values and their corresponding indices $\{{i}_{1}, {i}_{2}, \ldots, {i}_{A}\}$ are then identified as the detected error locations. Further, let the set of indices corresponds to the unreliable workers be denoted by the set $\mathcal{V}$. Since we assume that all $A$ Byzantine workers belong to the set of unreliable workers $\mathcal{V}$, we only need to evaluate $\|\bar{\Lambda}_{g h}(X_{r})\|^{2}$ for indices $r \in \mathcal{V}$. In particular, instead of evaluating $\bar{\Lambda}_{g,h}(x)$ at all indices corresponding to the set $\mathcal{G}$, we may restrict the evaluation to the subset of indices corresponding to the set $\mathcal{V}$. Although $\bar{\Lambda}_{g,h}(x)$ is evaluated over the indices in the set $\mathcal{G}$, in the following analysis we assume $s = 0$, i.e., $\mathcal{G} = [N]$. The same analysis can be extended to the case when $s > 0$. In this context, let the true set of error locations be $\mathcal{A} = \{i_{1}, i_{2}, \ldots, i_{A}\}$, and the set of recovered locations be $\hat{\mathcal{A}} = \{\hat{i}_{1}, \hat{i}_{2}, \ldots, \hat{i}_{A}\}$. Under the assumption that all Byzantine workers belong to the set $\mathcal{V}=\{k_{u_1}, k_{u_2}, \ldots, k_{u_\nu}\} $, the error probability associated with mislocalizing at least one of the $A$ error positions is $P_{{error}} = \Pr(\hat{\mathcal{A}} \neq \mathcal{A})$, which can be upper-bounded as

{\footnotesize
\begin{equation}
P_{{error}} \le \sum_{i=1}^{\nu - A} \sum_{a=1}^{A}
\Pr\!\left(\|\bar{\Lambda}_{g h}(X_{k_{u_{i}}})\|^{2} \le
\|\bar{\Lambda}_{g h}(X_{i_{a}})\|^{2}\right).
\end{equation}}
\vspace{-0.08cm}
Since the Byzantine workers are chosen as a subset of $\mathcal{V}$ and each subset of size $A$ is chosen uniformly at random over $\mathcal{V}$, the average probability of localization error is expressed as an expectation over all possible choices of $\mathcal{A} \subseteq \mathcal{V}$. In particular, the average probability of incorrect error localization is
\vspace{-0.1cm}

{
\scriptsize
\begin{IEEEeqnarray}{rcl}
\label{eq:avg error prob}
\bar{P}_{{error}}
= \mathbb{E}_{\mathcal{A}\subseteq \mathcal{V}}\bigg[\sum_{i=1}^{\nu-A}\sum_{a=1}^{A}\Pr\!\left(
\|\bar{\Lambda}_{g h}(X_{k_{u_{i}}})\|^{2}\le\|\bar{\Lambda}_{g h}(X_{i_{a}})\|^{2}\right)
\bigg].
\end{IEEEeqnarray}
}

Since $\bar{P}_{{error}}$, defined in \eqref{eq:avg error prob}, depends on the evaluation of $\bar{\Lambda}_{g,h}(x)$ at the indices assigned to the unreliable workers in $\mathcal{V}$ (which include the Byzantine workers), we can assign the evaluation indices to the set of workers in $\mathcal{V}$ in a manner that minimizes $\bar{P}_{{error}}$.

\section{Privacy and Localization Error based Assignment}
From a system perspective, one may use privacy or accuracy as evaluation metrics. Here, we highlight the trade-off between them by jointly considering privacy maximization and accuracy improvement. Since a closed-form expression for the accuracy metric is not available as a function of system parameters, we use localization error as a surrogate metric for accuracy. Accordingly, we propose strategies to assign evaluation indices to unreliable workers by minimizing the bound on MIS and the average probability of localization error.

\subsection{Assignment based on MIS }
Note that the bound in \eqref{eq:MIS second bound} in Proposition \ref{prop:MIS} involves computing the maximum trace term by considering all possible $\binom{N}{t}$ choices of $t$ colluding workers. However, under our assumption that the $t$ curious workers must come from the set of unreliable workers $\mathcal{P}_{unrel}$, we can significantly reduce this search space. Let the indices of the $\nu$ unreliable workers be denoted by $\mathcal{V}=\{k_{u_1}, k_{u_2}, \ldots, k_{u_\nu}\} \subseteq [N]$,
where $|\mathcal{V}| = \nu$. Since the $t$ colluding workers must lie within this set, the colluding set $T = \{i_1, i_2, \ldots, i_t\}$ is constrained to satisfy $T \subseteq \mathcal{V}$. Therefore, instead of maximizing over all $\binom{N}{t}$ subsets of size $t$, we only need to consider subsets of size $t$ drawn from $\mathcal{V}$. Therefore, the MIS leakage bound in \eqref{eq:MIS second bound} is 

\begin{equation}
\label{eq:MIS_unreliable}
\eta_c \leq \frac{1}{\ln 2}\, \max_{T \subseteq \mathcal{V}} 
\mathrm{tr}\!\left(\mathbf{\tilde{\Sigma}}_{T}^{-1}\mathbf{\Sigma}_{T}\right)
\frac{y^{2} t}{\sigma_n^{2}}
+ o\!\left(\frac{y^{2}}{\sigma_n^{2}}\right).
\end{equation}

Furthermore, the above bound depends on the magnitude of the trace term, which in turn is determined by the matrices $\mathbf{H}_{T}$ and $\mathbf{W}_{T}$ defined in \eqref{eq:HT_WT_scaled}, respectively. These matrices are constructed from evaluations of the Lagrange basis functions at the points $\alpha_{i}$. Therefore, since the master node knows the identities of the unreliable workers, the subset of evaluation points $\alpha_{i}$ for $i\in[\nu]$ assigned to these workers for evaluation of $g(z)$ must therefore be chosen carefully, as any group of $t$ colluding workers will lie within this set and their associated evaluation points directly influence the magnitude of $\mathrm{tr}\!\left(\mathbf{\tilde{\Sigma}}_{T}^{-1}\mathbf{\Sigma}_{T}\right)$ and thereby MIS leakage. 

We highlight that using the identity mapping i.e., $\phi(i)=i$ for assigning evaluation points of $g(z)$ may not be an optimal choice. Instead, a customized mapping can be used for assigning evaluations specifically to the unreliable worker nodes such that it minimizes MIS leakage bound $\eta_c$ defined in \eqref{eq:MIS_unreliable}. More precisely, let $\mathcal{Q} \subseteq [N]$ denote the set of $\nu$ distinct indices corresponding to the Chebyshev nodes selected for evaluating $g(z)$. Based on this selection, we use a mapping $\phi(\cdot)$ as described in Section~\ref{subsec:distribution of shares}, such that for every index $j \in \mathcal{V}$, there exists a unique $i \in \mathcal{Q}$ satisfying $\phi(i) = j$. This ensures that each unreliable worker $\mathcal{P}_{j}$ receives the evaluation $\mathbf{Y}_{i}$ associated with its assigned index $i\in\mathcal{Q}$. The remaining evaluations $\{\mathbf{Y}_{i} : i \notin \mathcal{Q}\}$ may be assigned arbitrarily among the reliable workers in $\mathcal{P}_{{rel}}$. Since the master node has $\binom{N}{\nu}$ possible choices for selecting $\mathcal{Q}$, the goal is to identify the optimal subset $\mathcal{Q}^{*} \subseteq [N]$ that minimizes the bound in \eqref{eq:MIS_unreliable}. More precisely, we propose to solve the following problem.

\begin{mdframed}
\footnotesize
\begin{small}
\begin{problem}
\label{opt1}
For the given parameters $N$, $t$, $\nu$, $y$, and $\sigma_{n}$, solve
\begin{IEEEeqnarray*}{rcl}
\mathcal{Q^*}=\arg\min_{\substack{\mathcal{Q} \subseteq [N] \\ \text{s.t. } |\mathcal{Q}|=\nu}}
\Bigg\{
\frac{1}{\ln 2}\,
\max_{\mathcal{T} \subseteq \mathcal{Q}}\,
\mathrm{tr}\!\left(\mathbf{\tilde{\Sigma}}_{T}^{-1}\mathbf{\Sigma}_{T}\right)
\frac{y^{2} t}{\sigma_{n}^{2}}
+ o\!\left(\frac{y^{2}}{\sigma_{n}^{2}}\right)
\Bigg\}.
\end{IEEEeqnarray*}
\end{problem}
\end{small}
\end{mdframed}
Here, $\mathcal{T}=\{i_{1}, i_{2}, \ldots, i_{t}\} \subseteq \mathcal{Q}$ denotes the set of indices corresponding to the colluding workers used for evaluating $g(z)$. For a given choice of system parameters, we solve Problem \ref{opt1} and present the resulting set $\mathcal{Q}^{*}$ in Table \ref{tab: z* and q*}. Note that $\mathcal{Q}^{*}$ consists of the $\nu$ indices that maximize the trace over all $\binom{N}{\nu}\binom{\nu}{t}$ combinations, and these indices typically correspond to the extreme points of the evaluation points $\{\alpha_{i}\}_{i=1}^{N}$. 

\vspace{-0.19cm}
\subsection{Assignment based on Localization Error}
Note that the bound of $\bar{P}_{error}$ in \eqref{eq:avg error prob}, depends on the evaluation of noisy error-locator polynomial $\bar{\Lambda}_{g,h}(x)$ at the subset of Chebyshev nodes assigned to the unreliable workers in $\mathcal{V}$. Since  $\bar{P}_{error}$ depends on these evaluation points, the choice of evaluation indices assigned to the set $\mathcal{V}$ must be chosen carefully. Specifically, for a given $\mathcal{V}$, we determine a corresponding set of evaluation indices $\mathcal{Q}$ used to assign the evaluations of $g(z)$ to the workers in set $\mathcal{V}$ with the aim of minimizing  $\bar{P}_{error}$. In this context, let $\mathcal{Q} = \{l_{u_1}, l_{u_2}, \ldots, l_{u_\nu}\}$ denote the set of evaluation indices assigned to the unreliable worker set $\mathcal{V}$, such that for every index $j \in \mathcal{V}$, there exists a unique $i \in \mathcal{Q}$ satisfying $\phi(i) = j$. This ensures that each unreliable worker $\mathcal{P}_{j}$ receives the evaluation $\mathbf{Y}_{i}$ associated with its assigned evaluation index $i \in \mathcal{Q}$. Since there are $\binom{N}{\nu}$ possible choices for $\mathcal{Q}$, our objective is to identify the optimal subset $\mathcal{U}^{*} \subseteq [N]$ that minimizes the average probability of localization error $\bar{P}_{{error}}$ defined in \eqref{eq:avg error prob}. Formally, we propose

{\scriptsize
\begin{equation}
\label{eq:adv_opt}
\mathcal{\mathcal{U}}^* =
\arg\min_{\substack{\mathcal{Q}\subseteq [N] \\ |\mathcal{Q}|=\nu}}
\mathbb{E}_{\mathcal{A}\subseteq \mathcal{Q}}
\!\left[
\sum_{i=1}^{\nu-A}\sum_{a=1}^{A}
\Pr\!\Big(
\|\bar{\Lambda}_{gh}(X_{l_{u_i}})\|^2 \le
\|\bar{\Lambda}_{gh}(X_{i_a})\|^2
\Big)
\right].
\end{equation}}

Furthermore, we derive a lower bound on the pairwise error probability (PEP), given by, $\Pr\!\big(\|\bar{\Lambda}_{gh}(X_{l_{u_i}})\|^2 \leq\|\bar{\Lambda}_{gh}(X_{i_a})\|^2\big)$, for a given pair of indices $l_{u_i} \in \mathcal{Q}$ and $i_a \in \mathcal{A}$. Using this bound,  we first define the following surrogate function for $\bar{P}_{error}$,
\begin{equation}
\label{eq:avg error expression}
\tilde{P}_{\mathrm{error}}(\mathcal{Q})
\triangleq
\mathbb{E}_{\mathcal{A}\subseteq \mathcal{Q}}
\Bigg[
\max_{i\in \{1,2,\ldots,\nu-A\}}
e^{-\frac{\zeta f(\mathcal{A},i)\delta_{\max}}{8\sigma_p^2}}
\Bigg].
\end{equation}
Using \eqref{eq:avg error expression}, we rewrite \eqref{eq:adv_opt} and finally propose to solve

\begin{mdframed}

\begin{problem}
\label{opt2}
For given $N$, $k$, $t$, $A$, $\nu$, $\sigma_p^2$, and $\zeta$, solve
\begin{IEEEeqnarray*}{rcl}
\mathcal{Z}^*
=\arg\min_{\substack{\mathcal{Q} \subseteq [N] \\ |\mathcal{Q}|=\nu}}
\mathbb{E}_{\mathcal{A}\subseteq \mathcal{Q}}
\Bigg[
\max_{i\in \{1,2,\ldots, \nu-A\}}
e^{-\frac{\zeta f(\mathcal{A},i)\delta_{\max}}{8\sigma_p^2}}
\Bigg].
\end{IEEEeqnarray*}
\end{problem}
\end{mdframed}

Here, $f(\mathcal{A},i)\!=\!\left|\prod_{a=1}^{A} \left(\cos \theta_{l_{u_i}}-\cos \theta_{i_a} \right) \right|^{2}$, where $\theta_{l_{u_i}}\!\!=\!\!\frac{(2l_{u_i}-1)\pi}{2N}$ and $\theta_{i_a}\!\! =\! \frac{(2i_a-1)\pi}{2N}$, for $l_{u_i}\!\!\in\!\mathcal{Q}$ and $i_a\in\mathcal{A}$, $\zeta$ is a constant and $\frac{\beta}{1+\beta} \le \delta_{\max}$, with $\beta = \frac{4}{\zeta \sum_{k=1}^{A} (\cos^{k}\theta_{l_{u_i}} -\cos^{k}\theta_{i_a})^{2}}$. We solve Problem \ref{opt2} and present the resulting set $\mathcal{Z}^{*}$ in Table \ref{tab: z* and q*} for a given set of parameters under varying $\sigma_{p}^{2}$. Note that, unlike $\mathcal{Q}^{*}$, the set $\mathcal{Z}^{*}$ consists of well separated evaluation indices. Further, for a given $N, \nu, t, A, y,$ and $\sigma_n$, the solution to Problem \ref{opt1} depends only on the parameters of encoding stage, whereas the solution to Problem \ref{opt2} additionally depends on the precision noise introduced at the workers during computation due to finite precision. Therefore, we evaluate Problem \ref{opt2} by varying $\sigma_p^2$, while solution to the Problem \ref{opt1} remains same. We present these results in Table \ref{tab: z* and q*} for different system parameters. As seen from Table  \ref{tab: z* and q*}, solutions to the Problems \ref{opt1} and \ref{opt2} are not identical. In particular, for a given $\sigma_{p}^{2}$, Problems \ref{opt1} and \ref{opt2} have no common solution, i.e., $\mathcal{Q}^*$ and $\mathcal{Z}^*$ are not identical.

\begin{table*}[ht]
    \centering
    \caption{$\mathcal{Q}^{*}$ and $\mathcal{Z}^{*}$ under varying $\sigma_{p}^2$ with parameters $N=21$, $D_{f}=2$, $\nu=12$, $k=3$, $t=A=3$, $y=10^{10}$, $\sigma_n=10^{23}$, $\zeta=100$.}
    \label{tab: z* and q*}
    \renewcommand{\arraystretch}{1}
    \footnotesize
    \resizebox{0.8\textwidth}{!}{%
    \begin{tabular}{|c|p{0.24\textwidth}|p{0.24\textwidth}|p{0.24\textwidth}|}
    \hline
    \diagbox[width=3.5em,height=2.2em]{set}{$\sigma_{p}^2$} 
     & $1\times10^{-2}$ & $1\times10^{-3}$ & $1\times10^{-4}$
    \tabularnewline
    \hline

$\mathcal{Z}^{*}$ &
1 3 5 7 8 10 11 12 14 15 17 20 &
1 3 5 6 7 9 10 12 13 14 17 20 &
1 3 5 7 8 9 10 11 13 14 17 19
\\ \hline

$\mathcal{Q}^{*}$ &
4 11 12 13 14 15 16 17 18 19 20 21 &
4 11 12 13 14 15 16 17 18 19 20 21 &
4 11 12 13 14 15 16 17 18 19 20 21
\\ \hline
\end{tabular}
}
\end{table*}


\newcolumntype{P}[1]{>{\rule{0pt}{1.5ex}}p{#1}}
\begin{table*}[ht]
\centering
\caption{$\mathcal{S}_g^{*}$ under different $\sigma_{p}^2$ after solving Problem \ref{opt:joint opt}. The parameters used for the experiments are same as used in Table \ref{tab: z* and q*}.}
\label{tab:variance_sets}
\renewcommand{\arraystretch}{1}
\footnotesize
\resizebox{0.8\textwidth}{!}{%
\begin{tabular}{|c|p{0.24\textwidth}|p{0.24\textwidth}|p{0.24\textwidth}|}
\hline
\diagbox[width=3.5em,height=2.2em]{$w$}{$\sigma_{p}^2$} 
 & \centering $1\times10^{-2}$ & \centering $1\times10^{-3}$ & \centering $1\times10^{-4}$
\tabularnewline
\hline

0 &
1 3 5 7 8 10 11 12 14 15 17 20 &
1 3 5 6 7 9 10 12 13 14 17 20 &
1 3 5 7 8 9 10 11 13 14 17 19
\\ \hline

0.2 &
1 3 5 7 8 10 11 12 14 15 17 20 &
1 4 5 7 8 9 10 11 13 14 17 20 &
1 4 6 8 9 10 11 12 14 15 18 20
\\ \hline

0.4 &
1 3 5 7 8 10 11 12 14 15 17 20 &
1 4 5 7 8 9 10 11 13 14 17 20 &
1 4 6 8 9 10 11 12 14 15 18 20
\\ \hline

0.6 &
1 3 5 7 8 10 11 12 14 15 17 20 &
1 4 5 7 8 9 10 11 13 14 17 20 &
3 4 6 8 9 10 11 12 14 15 18 20
\\ \hline

0.8 &
1 3 5 7 8 10 11 12 14 15 17 20 &
1 4 5 7 8 9 10 11 13 14 17 20 &
3 4 6 8 9 10 11 12 14 15 18 20
\\ \hline

1.0 &
4 11 12 13 14 15 16 17 18 19 20 21 &
4 11 12 13 14 15 16 17 18 19 20 21 &
4 11 12 13 14 15 16 17 18 19 20 21
\\ \hline

\end{tabular}}
\end{table*}



\begin{table*}[t]
\centering
\caption{$\mathcal{S}^*$ and $\mathcal{S}^*_{g}$ under varying $\sigma_p^2$, with $N = 13$, $D_{f}=2$, $\nu = 8$, $k = 3$, $t = A = 2$, $y = 10^{10}$, $\sigma_{n} = 10^{23}$, and $\zeta = 100$.}
\label{tab:bf_greedy_variance}
\renewcommand{\arraystretch}{1}
\footnotesize
\begin{tabular}{|c|p{0.20\textwidth}|p{0.20\textwidth}|p{0.20\textwidth}|p{0.20\textwidth}|}
\hline
\multirow{2}{*}{$w$} 
& \multicolumn{2}{c|}{$\mathcal{S}^*$ } 
& \multicolumn{2}{c|}{$\mathcal{S}^*_{g}$ } \\
\cline{2-5}
& ${1\times10^{-2}}$ & ${1\times10^{-4}}$
& ${1\times10^{-2}}$ & ${1\times10^{-4}}$ \\
\hline

0 &
1 3 5 6 8 9 11 14 &
1 3 5 6 8 9 11 14 &
1 3 5 6 7 9 11 14 &
1 2 5 6 7 9 10 14 \\
\hline

0.4 &
1 3 5 6 7 9 10 14 &
2 5 6 7 9 10 12 14  &
1 3 5 6 7 9 11 14 &
2 3 5 6 7 10 11 14 \\
\hline

0.8 &
1 3 5 6 7 9 10 14 &
2 5 6 7 9 10 12 14 &
1 3 5 6 7 9 11 14 &
2 3 5 6 7 10 11 14 \\
\hline

1.0 &
2 5 10 11 12 13 14 15 &
2 5 10 11 12 13 14 15 &
2 5 10 11 12 13 14 15 &
2 5 10 11 12 13 14 15 \\
\hline

\end{tabular}
\end{table*}

{\tiny
\begin{algorithm}[t]
\caption{Greedy algorithm for computing $\mathcal{S}_{g}^*$}
\label{algo:1}
\KwIn{$N$, $\nu$, $t$, $A$, $y$, $\sigma_n$, $\sigma_p^2$, and $w \in [0,1]$}
\KwOut{$\mathcal{S}^{*}_{g}$}

$m \leftarrow \max(A,t)$\;
$\mathcal{C}_{\text{initial}} \leftarrow \{\mathcal{Q} \subseteq [N] : |\mathcal{Q}| = m\}$\;

\textbf{Initial Selection:}
\ForEach{$\mathcal{Q} \in \mathcal{C}_{\text{initial}}$}{
    Compute $\eta_c(\mathcal{Q})$ using \eqref{eq:MIS_unreliable} with $\mathcal{V}=\mathcal{Q}$\;
    Compute $\bar{P}_{{error}}(\mathcal{Q})$ using \eqref{eq:avg error expression} with $\mathcal{V}=\mathcal{Q}$\;
    $J(\mathcal{Q}) \leftarrow w \cdot \eta_c(\mathcal{Q}) + (1-w)\cdot \bar{P}_{{error}}(\mathcal{Q})$\;
}
$\mathcal{V} \leftarrow \arg\min\limits_{\mathcal{Q} \in \mathcal{C}_{\text{initial}}} J(\mathcal{Q})$\;
$\mathcal{R} \leftarrow [N] \setminus \mathcal{V}$\;

\textbf{\textit{Greedy} Expansion:}
\While{$|\mathcal{V}| < \nu$}{
    \ForEach{$p \in \mathcal{R}$}{
        $\mathcal{Q} \leftarrow \mathcal{V} \cup \{p\}$\;
        Compute $\eta_c(\mathcal{Q})$ using \eqref{eq:MIS_unreliable} with $\mathcal{V}=\mathcal{Q}$\;
        Compute $\bar{P}_{{error}}(\mathcal{Q})$ using \eqref{eq:avg error expression} with $\mathcal{V}=\mathcal{Q}$\;
        $J(p) \leftarrow w \cdot \eta_c(\mathcal{Q}) + (1-w)\cdot \bar{P}_{{error}}(\mathcal{Q})$\;
    }
    $p^* \leftarrow \arg\min\limits_{p \in \mathcal{R}} J(p)$\;
    $\mathcal{V} \leftarrow \mathcal{V} \cup \{p^*\}$\;
    $\mathcal{R} \leftarrow \mathcal{R} \setminus \{p^*\}$\;
}
\Return{$\mathcal{S}_g^* = \mathcal{V}$}
\end{algorithm}}

\begin{figure}[ht]
    \centering
    \includegraphics[width=0.88\columnwidth]{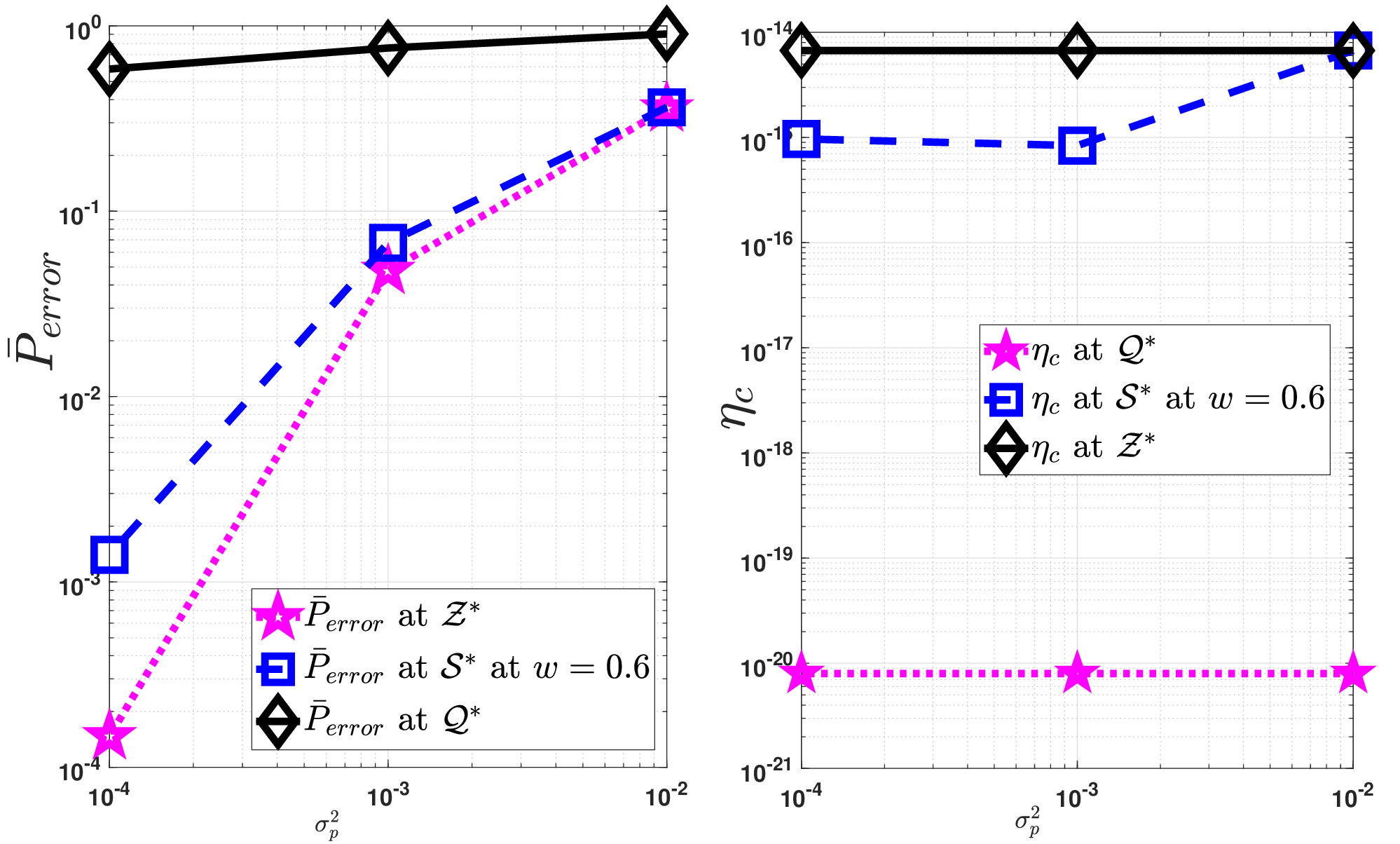}
    \caption{Results at $\mathcal{S}^*_g$ for different values of $w$ and $\sigma_p^2$ with parameters $N=21$, $\nu=12$, $k=3$, $t=A=3$, $y=10^{10}$, $\sigma_n=10^{23}$, $\zeta=100$.}
    \label{fig:four_in_row}
\end{figure}

To further investigate this behavior, we compute $\eta_{c}$ obtained when using $\mathcal{Z}^{*}$, and $\bar{P}_{{error}}$ obtained when using $\mathcal{Q}^{*}$. These results are illustrated in Fig.\ref{fig:four_in_row}. As shown in Fig. \ref{fig:four_in_row}, using $\mathcal{Q}^{*}$ to assign evaluation indices to the unreliable worker set $\mathcal{V}$ (with the goal of minimizing $\bar{P}_{{error}}$) can lead to a substantial increase in $\bar{P}_{{error}}$. Conversely, using $\mathcal{Z}^{*}$ to assign evaluation indices (with the goal of minimizing $\eta_{c}$) can significantly increase $\eta_{c}$. This trade-off leads us to jointly optimize both metrics. Therefore, in the next subsection, we present a strategy for obtaining these evaluation indices that simultaneously minimizes both $\eta_{c}$ and $\bar{P}_{{error}}$.

\subsection{Assignment based on Joint MIS and  Localization Error}
Provided the solutions of Problem \ref{opt1} and Problem \ref{opt2} are, in general, not identical, a fundamental trade-off arises between minimizing $\eta_c$ and $\bar{P}_{{error}}$. Therefore, in this section, we seek to identify the optimal set of indices $\mathcal{S}^{*}$ for assigning evaluations to the set of workers in $\mathcal{V}$, that jointly minimizes both $\eta_{c}$ and $\bar{P}_{{error}}$. Formally, we propose to solve

\begin{mdframed}
\footnotesize
\begin{problem}
\label{opt:joint opt}
Given $N$, $\nu$, $t$, $A$, $y$, $\sigma_n$, $\sigma_p^2$, and $w \in [0,1]$, solve
\begin{equation}
\mathcal{S}^{*}
= \arg\min_{\substack{\mathcal{Q} \subseteq [N], |\mathcal{Q}| = \nu}}
\Big\{
w\,\eta_{c}(\mathcal{Q}) + (1-w)\,\bar{P}_{{error}}(\mathcal{Q})
\Big\},
\end{equation}

where $\eta_{c}(\mathcal{Q})$ is the MIS leakage defined in \eqref{eq:MIS_unreliable} and $\bar{P}_{{error}}(\mathcal{Q})$ is the average probability of localization error corresponding to the subset $\mathcal{Q}$ defined in \eqref{eq:avg error expression}.
\end{problem}
\end{mdframed}


Since the master node knows the set of indices correspond to the unreliable workers i.e., $\mathcal{V}$, it can distribute the shares of $g(z)$ using the evaluation indices $\mathcal{S}^{*}$ by solving Problem \ref{opt:joint opt} such that each worker $\mathcal{P}_{j}$ for $j\in[\mathcal{V}]$ receives the evaluation $\mathbf{Y}_{i}$ , where $i \in \mathcal{\mathcal{S}^{*}}$. In particular, solving Problem \ref{opt:joint opt} yields a set of evaluation indices that simultaneously minimizes both $\eta_{c}$ and $\bar{P}_{{error}}$. However, to obtain the optimal set $\mathcal{S}^{*}$, one may perform a brute-force search over all possible index combinations, 
which requires evaluating  $\binom{N}{\nu}\!\left(\binom{\nu}{t} + \binom{\nu}{A}(\nu - A)\right)$ MIS and average PEP computations to determine the optimal assignment of evaluation indices for the unreliable worker set $\mathcal{V}$. However, as the system size grows, this approach becomes computationally expensive. Therefore, we propose a low-complexity greedy based algorithm that finds a suboptimal index set, denoted by $\mathcal{S}^{*}_{g}$, which approximates $\mathcal{S}^{*}$. The algorithm begins by selecting an initial subset of size $m\!\!=\!\!\max(A,t)$ that minimizes the objective in Problem \ref{opt:joint opt}. Subsequently, it greedily expands this set by iteratively adding the workers from the remaining pool that minimizes the weighted objective in Problem \ref{opt:joint opt} until the target set size $\nu$ is achieved. We formally present this approach in Algorithm \ref{algo:1}. Further, we compare the computational complexity of the brute-force and greedy approaches used to obtain $\mathcal{S}^{*}$ and $\mathcal{S}^{*}_{g}$, respectively, as shown in Table \ref{tab:complexity_analysis}. Here, complexity is measured in terms of the number of MIS leakage $\eta_{c}$ and average PEP evaluations required.

In Table \ref{tab:variance_sets}, we present the sets $\mathcal{S}^{*}_{g}$ obtained for different $\sigma_p^2$ by solving Problem \ref{opt:joint opt} using our {greedy} based algorithm for $0\leq w\leq 1$, where $w=0$ yields $\mathcal{Z}^{*}$ (minimizing $\bar{P}_{{error}}$) and $w=1$ yields $\mathcal{Q}^{*}$ (minimizing $\eta_c$). The experimental parameters are also listed. We also present simulation plots of $\bar{P}_{{error}}$ and $\eta_{c}$ in Fig. \ref{fig:four_in_row}, obtained by solving Problem \ref{opt:joint opt} at $w = 0.6$ under varying $\sigma_{p}^{2}$. As shown in the figure, choosing $0 < w < 1$ yields a joint solution that simultaneously reduces both $\eta_{c}$ and $\bar{P}_{{error}}$, compared to the extreme cases when $w=0$ (corresponding to $\mathcal{Z}^{*}$) and $w=1$ (corresponding to $\mathcal{Q}^{*}$). Further, for smaller system sizes, we compare the optimal set $\mathcal{S}^{*}$ obtained via brute-force search with the suboptimal set $\mathcal{S}^{*}_{g}$ produced by the {greedy} algorithm under varying $\sigma_{p}^{2}$. The results, shown in Table~\ref{tab:bf_greedy_variance}, indicate that $\mathcal{S}^{*}$ and $\mathcal{S}^{*}_{g}$ are very identical, demonstrating that the {greedy} approach closely approximates the optimal solution  $\mathcal{S}^{*}$.

\begin{table}[t]
    \centering
    \caption{Complexity of brute-force and {greedy} approach.}
    \label{tab:complexity_analysis}
    \renewcommand{\arraystretch}{1.5}
    \resizebox{\columnwidth}{!}{%
    \begin{tabular}{|l|l|l|}
    \hline
    {Method} & {Number of MIS/average PEP computations} & {Complexity Order} \\ 
    \hline
    
    Brute-force & 
    $\binom{N}{\nu}\!\left(\binom{\nu}{t}+\binom{\nu}{A}(\nu-A)\right)$ &
    Exponential \\ 
    \hline
    
    \textit{Greedy} based & 
    $\binom{N}{m} + \sum_{u=m}^{\nu-1}(N-u)
    \big(\binom{u+1}{t}+\binom{u+1}{A}\big)$ &
    Polynomial \\ 
    \hline
    \end{tabular}
}
\end{table}

\section{Summary}
This paper presents a robust NS-LCC framework that accounts for profiled workers. We analyze privacy and accuracy using MIS leakage and localization error, and show that optimizing these two metrics separately yields different assignment sets for unreliable workers revealing a privacy and robustness trade-off. To address this, we formulate a joint optimization problem and develop a low-complexity greedy algorithm that assigns encoded shares to unreliable workers while simultaneously balancing both metrics.


\end{document}